\numberwithin{equation}{section}
\newtheorem{theorem}{Theorem}[section]
\newtheorem{proposition}[theorem]{Proposition}
\newtheorem{lemma}[theorem]{Lemma}
\newtheorem*{theorem*}{Theorem}
\newtheorem*{claim*}{Claim}
\newtheorem*{proposition*}{Proposition}
\newtheorem*{lemma*}{Lemma}
\newtheorem*{corollary*}{Corollary}
\theoremstyle{definition}
\newtheorem{definition}[theorem]{Definition}
\newtheorem{remark}[theorem]{Remark}
\newtheorem{example}[theorem]{Example}
\newtheorem{fact}[theorem]{Fact}
\newtheorem*{definition*}{Definition}
\newtheorem*{observation*}{Observation}
\newtheorem*{remark*}{Remark}
\newtheorem*{example*}{Example}
\newtheorem*{question*}{Question}
\newtheorem*{exercise*}{Exercise}
\newtheorem*{fact*}{Fact}
\newtheorem*{notation*}{Notation}
\renewcommand{\mod}{\bmod\,}
\newcommand{\ZZ}{\mathbb{Z}}
\newcommand{\FF}{\mathbb{F}}
\newcommand{\ii}{^{-1}}
\renewcommand{\epsilon}{\varepsilon}
\newcommand{\lnk}[1]{\href{#1}{\color{blue}{link}}}
\DeclareMathOperator{\SL}{SL}
\DeclareMathOperator{\PSL}{PSL}
\DeclareMathOperator{\PP}{Pr}
\title{Post-quantum hash functions using $\mathrm{SL}_n(\mathbb{F}_p)$}
\author{Corentin Le Coz\thanks{Supported by the Israel Science Foundation (grant no. 2919/19), the FWO and the FNRS under the Excellence of Science (EOS)
program (project ID 40007542)}, Christopher Battarbee, Ram\'{o}n Flores\thanks{Partially supported by grant PID2020-117971GB-C21 of the Spanish Ministry of Science and
Innovation, and grant FQM-213 of the Junta de Andaluc\'{i}a.},\\ Thomas Koberda\thanks{Partially supported by NSF grant DMS-2002596}, and Delaram Kahrobaei\thanks{Partially supported by a
Canada's New Frontiers in Research Fund, under the Exploration grant entitled ``Algebraic Techniques for Quantum Security" as well as a grant from CUNY. DK also has been partially supported by an ONR grant, 62909-24-1-2002}}
\affil{Oppida, France, corentinlecoz@outlook.com\\
Department of Computer Science, University of York, UK, cb2036@york.ac.uk\\
 Department of Geometry and Topology, University of Seville, Spain,
ramonjflores@us.es\\
Department of Mathematics, University of Virginia, thomas.koberda@gmail.com\\
Computer Science and Mathematics Departments, Queens College CUNY; Department of Computer Science, University of York, UK; Initiatives for the Theoretical Sciences (ITS), CUNY Graduate Center,
dkahrobaei@gc.cuny.edu}
\begin{document}
\maketitle

\begin{abstract}
We define new families of Tillich-Z\'emor hash functions, using higher dimensional special linear groups over finite fields as platforms.
The Cayley graphs of these groups combine fast mixing properties and high girth, which
together give rise to good preimage and collision resistance of the
corresponding hash functions. We justify the claim that the resulting
hash functions are post-quantum secure.
\end{abstract}


\section{Introduction}

Hash functions obtained from families of expander graphs were introduced by Charles-Lauter-Goren in \cite{charles2008cryptographic}, where they were in turn
inspired by a scheme of Tillich-Z\'emor \cite{Tillich1994Algebraic}.
Charles-Lauter-Goren considered specific families of expander graphs discovered by
Lubotzky-Phillips-Sarnak \cite{Lubotzky1988Ramanujan} and Pizer \cite{Pizer1990Ramanujan}.
The Charles-Lauter-Goren construction is quite general, and can be applied
to any expander family in which finding cycles is hard, and thereby
furnishes collision resistant hash functions.
Similar schemes have been proposed by several authors; see \cite{Shpilrain2016Compositions,Bromberg2017Navigating,Ghaffari2018More}, and \cite{Sosnovski2018} for a survey of this topic.

Interest in hash functions based on novel platforms
fits into the context of recent
efforts to modernize existing hash functions, and to adapt them
to the design and security of hash-based consensus mechanisms,
most notably with respect to blockchains \cite{Borthwick2020},
and especially in light of the recently proved practicality of finding
collisions in the SHA-1 hashing algorithm \cite{Stevens2017First}.

The general idea behind Tillich-Z\'emor hash functions is the following.
Fixing a base vertex, the input of the hash function is interpreted as a sequence of
instructions, resulting in a non-backtracking path in a $d$-regular graph.
The output of the hash function is the endpoint vertex of the path.
More precisely, the input is a string of numbers in
\[[d-1] \vcentcolon = \{1, 2, \ldots, d-1\}\] of arbitrary
length, and the output is the vertex obtained by performing a simple walk starting at a base vertex, using the elements of $[d-1]$ as transition data for subsequent steps in the
walk. See Definition \ref{definition: general hash function} below for details.

A well-constructed hash function is an efficiently computable function which
enjoys two main features. The first is \emph{preimage resistance},
which means that given a point in the
hash value, it is computationally hard to find an input that maps to that hash value.
The second,  is \emph{collision resistance}, which requires the problem of finding distinct inputs with the same output to be computationally difficult.

The main goal of this paper is to propose a new Goren-Lauter-Charles--type
scheme, where the hash functions use Cayley graphs of the special linear groups $\SL_{n}(\FF_p)$ as platforms, where here $p$ is prime and $n \ge 3$.
The restriction to the fields $\FF_p$ comes from the fact that the corresponding groups will be obtained as quotients of $\SL_n(\ZZ)$.
A crucial observation is that, in schemes using these groups as a platform,
the problem of finding a preimage or a collision
corresponds to finding factorizations of the identity matrix with prescribed factors.
With this observation in hand,
and by taking into account recent work of Arzhantseva-Biswas \cite{Arzhantseva2018Large}
concerning the expanding properties of the Cayley graphs of these groups, we offer a detailed study of the security of our protocol. In particular, we have
the following:

\begin{itemize}

\item \emph{Preimage resistance}. In Proposition \ref{expansion}, we collect
the expansion properties the family of Cayley graphs $\{G_{n,p}\}_{p}$ of the groups $\SL_{n}(\FF_p)$, where
$n\geq 3$ is fixed and where $p$ tends to infinity. Expansion in this family
of graphs guarantees good mixing properties, because under these conditions
the random walk gives a good approximation to the uniform distribution after
$O(\log p)$ steps.

\item \emph{Collision resistance}. The strength of our hash function
with respect to collision resistance is mainly based on the absence of small
cycles in the Cayley graphs of the underlying groups. In fact,
Proposition \ref{proposition: freeness implies big girth} provides a lower bound on
the girth of the graphs $\{G_{n,p}\}_{p}$ on the order of $\log p$. It follows
that a factorization of the identity, which is easily seen to be equivalent to finding
a collision for the hash function, is in turn equivalent to solving over
a system of $n^2$ equations in a number of variables that is $O(\log p)$, over the
field $\mathbb F_p$. In full generality, solving such systems of equations is NP-hard.

Replacing the problem of factoring the identity with the problem of factoring an
arbitrary group element yields a similar system of equations, lending further evidence of resistance of the hash function to finding
preimages; see Section \ref{subsection: resistance via polynomial equations}.

\end{itemize}

For $n=2$, certain Cayley graphs of the groups $\PSL_2(\FF_p)$ give rise to the
celebrated Lubotzky--Phillips--Sarnak expander graphs \cite{Lubotzky1988Ramanujan},
which were then used to build hash functions in \cite{charles2008cryptographic}.
A successful collision attack (i.e.~an efficient computation of a
collision) was found in \cite{TillichZemor08Collision}, by taking coefficients in $\ZZ[i]$ and
then reducing to a system of equations of degree two.
More recently, Sardari \cite{Sardari} attacked preimage resistance by designing a polynomial-time algorithm that represents a number as a sum of four squares with some restricted congruence conditions.
The essentially different nature of higher dimensional special linear groups gives
evidence of additional security, and makes it likely that these attacks are
far more difficult to execute for the hash functions proposed here.

Considering symmetric generating sets enables us to employ results
from the theory of simple random walks in simplicial graphs.
Nevertheless, the fact that we restrict ourselves to non backtracking
random walks precludes the use of multiplicativity of the hash function,
and thus complicates parallel computation.
We discuss these issues in Section \ref{subsection: homomorphism parallel computing}.

\paragraph{\it Structure of the paper:}
In Section \ref{section: definition hash function} we give
the relevant group theoretic background, define the hash functions,
prove the expansion property of the Cayley graphs, and
exhibit concrete examples.
In Section \ref{section: properties hash function}, we describe the various
properties of our scheme: namely, we relate free generation with a lower bound
on the girth; we then describe the role of polynomial equations in
preimage finding and in collision resistance.
Finally, we discuss multiplicativity and parallel computing,
showing that the collision attack from Grassl et. al. \cite{grassl2011cryptanalysis} using palindromes
does not break the scheme presented here.
Section \ref{section: conclusions} concludes the paper.

In Appendix \ref{section: implementation} we provide a Python/Sage implementation of an instance of the hash functions considered in the paper.

\section{Definition of the hash functions}\label{section: definition hash function}

This section defines our hash functions and exhibits concrete instances.
We start by recalling some relevant background material which will be
essential in our construction and in the sequel.

\subsection{Background about special linear groups}

For general results about special linear groups over finite fields,
we refer the reader to Hall's book \cite{Hall2015}. In this section
we concentrate on a number of properties established by Arzhantseva-Biswas
in their article \cite{Arzhantseva2018Large}. We summarize their
results in the following theorem:

\begin{theorem}[Arzhantseva-Biswas \cite{Arzhantseva2018Large}]
\label{theorem: ArzBis}

Let $n\geq 2$ and let $p$ a prime. Write $\pi_p: \SL_n(\ZZ) \to \SL_n(\FF_p)$ for the canonical projection given
by reduction modulo $p$. There exist matrices
$\tilde{A},\tilde{B} \in \SL_n(\ZZ)$ such that:
\begin{enumerate}[label=(\roman*)]
	\item\label{item: Ap and Bp generate} There exists a prime $p_0$ such that
	for all $p \ge p_0$, the matrices $\tilde{A}_p \vcentcolon= \pi_p(\tilde{A})$ and
	$\tilde{B}_p \vcentcolon= \pi_p(\tilde{B})$ generate $\SL_n(\FF_p)$.
	\item\label{item: <A,B> is free} If $\langle \tilde{A},\tilde{B} \rangle$ is the subgroup generated
	by $\tilde{A}$ and $\tilde{B}$ inside of $\SL_n(\ZZ)$, then  $\langle \tilde{A},\tilde{B} \rangle$
	is isomorphic to $F_2$, the free group of rank two.
	\item\label{item: diameter upper bound} The diameter of the Cayley graph
	$G_{n,p}$ of $\SL_n(\FF_p)$ with respect to $\{\tilde{A}_p^{\pm1},\tilde{B}_p^{\pm1}\}$
	is $O(\log p)$.
\end{enumerate}

\end{theorem}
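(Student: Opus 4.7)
The plan is to exhibit an explicit pair $\tilde A, \tilde B$ and verify the three properties in an order dictated by their logical dependencies. A natural candidate, generalizing the classical Sanov pair in $\SL_2(\ZZ)$, is to take transvections $\tilde A = I + c E_{1,2}$ and $\tilde B = I + c E_{2,1}$ for $c \ge 2$ when $n = 2$, and analogous matrices when $n \ge 3$, possibly conjugated by a carefully chosen element or supplemented with an additional transvection in another coordinate pair. The aim is to guarantee that $H := \langle \tilde A, \tilde B\rangle$ is simultaneously free of rank two and Zariski dense in $\SL_n$; these two properties drive parts (ii) and (i) respectively.

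For part (ii), I would argue via the ping-pong lemma applied to the linear action of $H$ on $\bbR^n$ (equivalently, the projective action on $\bbP^{n-1}$). One identifies disjoint attracting and repelling neighborhoods for $\tilde A^{\pm 1}$ and $\tilde B^{\pm 1}$, and then verifying the ping-pong hypothesis reduces to a short computation with the explicit matrix entries. Part (i) then follows from strong approximation: by the theorem of Matthews--Vaserstein--Weisfeiler (or Nori's strong approximation theorem), any Zariski-dense subgroup of $\SL_n(\ZZ)$ maps onto $\SL_n(\FF_p)$ under $\pi_p$ for all but finitely many primes $p$, yielding the required threshold $p_0$. Zariski density is automatic for $n=2$ because a non-elementary subgroup of $\SL_2(\ZZ)$ is not contained in any proper algebraic subgroup, while for $n \ge 3$ it must be enforced by the explicit choice, for instance by verifying that iterated Lie brackets of $\log \tilde A$ and $\log \tilde B$ span $\frsl_n$.

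The main obstacle is part (iii). Since $|\SL_n(\FF_p)| = \Theta(p^{n^2-1})$, a diameter bound of order $\log p$ is optimal and is essentially equivalent to showing that the family $\{G_{n,p}\}_p$ is an expander family. For $n \ge 3$ the cleanest route is via Kazhdan's Property (T) for $\SL_n(\ZZ)$: combined with the surjectivity established in (i), Property (T) yields a uniform spectral gap for the quasi-regular representation of $\SL_n(\ZZ)$ on the orthogonal complement of the constants in $\ell^2(\SL_n(\FF_p))$, which translates into a spectral gap, and hence a diameter bound of order $\log p$, for $G_{n,p}$. For $n = 2$ Property (T) is unavailable, and one must instead invoke a Bourgain--Gamburd-type spectral gap theorem for thin subgroups of $\SL_2(\ZZ)$; this is the most technically delicate ingredient. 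The passage from a uniform spectral gap to the explicit $O(\log p)$ diameter is then a standard computation.
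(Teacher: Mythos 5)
First, a caveat: the paper does not prove this statement at all --- Theorem \ref{theorem: ArzBis} is imported wholesale from Arzhantseva--Biswas \cite{Arzhantseva2018Large}, so there is no in-paper proof to compare against. Judged on its own terms, your outline reproduces the standard architecture (explicit unipotent-type generators, ping-pong for freeness, strong approximation \`a la Matthews--Vaserstein--Weisfeiler or Nori for generation mod $p$, spectral gap for the diameter), and parts (i) and (ii) are right in spirit, modulo the vagueness about which matrices to take for $n\ge 3$: the single transvections $I+cE_{1,2}$ and $I+cE_{2,1}$ generate a group sitting inside a block-embedded copy of $\SL_2$, so Zariski density genuinely forces a different choice (the source uses powers of the full super-/sub-diagonal shift matrices, with congruence conditions on the parameters), and ping-pong for non-proximal unipotent elements acting on $\bbP^{n-1}$ is more delicate than your sketch suggests. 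Also note that ``supplementing with an additional transvection'' is not available to you: the theorem requires exactly two generators that are free of rank two.

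The genuine gap is in part (iii) for $n\ge 3$. Since $\langle\tilde A,\tilde B\rangle$ is free of rank two by (ii), it has infinite index in $\SL_n(\ZZ)$ for $n\ge 3$ (a finite-index subgroup would inherit Property (T), which no nonabelian free group has), so it is necessarily a thin subgroup. Property (T) of $\SL_n(\ZZ)$ yields a uniform spectral gap for averaging operators over a generating set \emph{of $\SL_n(\ZZ)$} acting on the orthogonal complement of the constants in $\ell^2(\SL_n(\FF_p))$; it says nothing about the operator $\frac14\bigl(\tilde A_p+\tilde A_p^{-1}+\tilde B_p+\tilde B_p^{-1}\bigr)$, because $\{\tilde A,\tilde B\}$ does not generate a finite-index subgroup of $\SL_n(\ZZ)$. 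Surjectivity of $\pi_p$ restricted to $\langle\tilde A,\tilde B\rangle$ gives connectivity of the Cayley graph, not transfer of the gap. What is actually needed is super-strong approximation for Zariski-dense thin subgroups (Bourgain--Varj\'u, Salehi Golsefidy--Varj\'u \cite{Golsefidy2012Expansion}); that is, the Bourgain--Gamburd-type input you correctly reserve for $n=2$ is equally unavoidable for $n\ge3$. (The same imprecision appears in the paper's own sketch of Proposition \ref{expansion}, so you are in good company, but a correct proof of (iii) cannot rest on Property (T) alone.)
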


Observe that for $n\geq 3$, items~\ref{item: Ap and Bp generate} and~\ref{item: <A,B> is free}
	 reflect the fact that the subgroup of $\SL_n(\ZZ)$ generated by
	 $\tilde A$ and $\tilde B$ is usually a thin subgroup of $\SL_n(\mathbb R)$.
	 The fact that $\tilde{A}_p$ and $\tilde{B}_p$ generated the corresponding
	 finite quotients for
	all but finitely many values of $p$ is a reflection of strong/superstrong
	approximation. In turn, item \ref{item: diameter upper bound}
	implies that the girth of $G_{n,p}$ is optimal, because the graphs $\{G_{n,p}\}_p$ form a family of expander graphs (see Proposition \ref{expansion} below).

\begin{remark}
When using the Cayley graphs $G_{n,p}$ as a platform, we think of $n$ as being
fixed and $p$
as modulating the level of security, with the trade-off being that the hash functions become more expensive to compute for large $p$.
\end{remark}

Possible choices for $\tilde{A}$ and $\tilde{B}$ are given by:
\[
\tilde A = \begin{pmatrix}
	1 & a & 0 & 0& \ldots & 0\\
	0 & 1 & a & 0& \ldots & 0\\
	0 & 0 & 1 & a & \ldots &0\\
	\vdots & & & & & \vdots\\
	&&&&\ldots&a\\
	0 & 0 & 0& 0&\ldots & 1\\
	\end{pmatrix},\,
	\tilde B = \begin{pmatrix}
	1 & 0 & 0 & \ldots & &0\\
	b & 1 & 0 & \ldots & &0\\
	0 & b & 1 &  \ldots& &0\\
	0 & 0 & b & \ldots & &0\\
	\vdots &&&&\vdots&\vdots\\
		0& 0 & 0 & \ldots & {\strut \strut b} & 1\\
	\end{pmatrix}\in \SL_{n}(\ZZ),
\] with $a,b\geq 2$. These matrices will be crucial in the description of our hash function.

\subsection{Expansion}

For us to implement the Charles-Lauter-Goren approach, we must take advantage of the
good mixing properties of the expander graphs.

\begin{proposition}
\label{expansion}

For $n\geq 3$ fixed and $p\rightarrow\infty$, the sequence
$\{G_{n,p}\}_{p}$ is a family of expander graphs.

\end{proposition}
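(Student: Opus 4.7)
The plan is to deduce the proposition from the expansion theorem of Bourgain--Varj\'u (generalizing the Bourgain--Gamburd theorem for $n = 2$), which asserts the following: if $S \subseteq \SL_n(\ZZ)$ is a finite symmetric subset whose generated subgroup is Zariski-dense in $\SL_n$, then the Cayley graphs $\{\Cay(\SL_n(\FF_p), \pi_p(S))\}_p$ form a family of expanders as $p$ ranges over the primes. Once this is in hand, the proposition reduces to verifying that $S = \{\tilde A^{\pm 1},\tilde B^{\pm 1}\}$ generates a Zariski-dense subgroup of $\SL_n$.

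To establish Zariski density, let $H$ denote the Zariski closure of $\langle\tilde A,\tilde B\rangle$ inside $\SL_n$. By item~\ref{item: <A,B> is free} of Theorem~\ref{theorem: ArzBis}, the subgroup $\langle\tilde A,\tilde B\rangle$ is free of rank two and hence infinite, forcing $\dim H \ge 1$. Suppose toward a contradiction that $H$ is a proper algebraic subgroup of $\SL_n$, so that $\dim H \le n^2 - 2$. Then for all but finitely many primes $p$, the reduction $H_{\FF_p}$ is a proper algebraic subgroup of $\SL_{n,\FF_p}$, and elementary point-counting (or the Lang--Weil estimate) gives $|H(\FF_p)| = O(p^{\dim H}) = o(|\SL_n(\FF_p)|)$. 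Since $\tilde A_p, \tilde B_p \in H(\FF_p)$ for such $p$, the subgroup they generate is properly contained in $\SL_n(\FF_p)$ for all sufficiently large $p$, contradicting item~\ref{item: Ap and Bp generate} of Theorem~\ref{theorem: ArzBis}. Thus $H = \SL_n$.

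Applying Bourgain--Varj\'u now produces a uniform spectral gap for the normalized adjacency operators of $G_{n,p}$, which is precisely the expansion statement.

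The main obstacle is not the logical structure of the argument, which is essentially formal, but rather the invocation of the Bourgain--Varj\'u theorem itself: its proof draws on substantial machinery from arithmetic combinatorics, most notably the product theorems for finite simple groups of Lie type (Helfgott, Breuillard--Green--Tao, Pyber--Szab\'o), combined with non-commutative $L^2$-flattening estimates in the spirit of Bourgain--Gamburd. The Zariski density step, while short, requires a touch of care regarding the finitely many primes of bad reduction for $H$; these exceptional primes are inconsequential because the expansion conclusion is asymptotic in $p$.
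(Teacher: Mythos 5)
Your argument is correct, but it takes a genuinely different route from the paper's own (very brief) sketch. The paper deduces expansion from the fact that $\SL_n(\ZZ)$ has property (T) for $n\geq 3$ (and that $\SL_2(\ZZ)$ has property $(\tau)$ with respect to congruence subgroups), combined with item~\ref{item: Ap and Bp generate} of Theorem~\ref{theorem: ArzBis}; you instead invoke super-strong approximation in the form of the Bourgain--Varj\'u theorem, after verifying Zariski density of $\langle \tilde A,\tilde B\rangle$ by playing item~\ref{item: Ap and Bp generate} against a point count for a putative proper Zariski closure. Your route is arguably the one that is genuinely adequate here: property (T) of $\SL_n(\ZZ)$ yields a uniform spectral gap for the quotients $\SL_n(\FF_p)$ only with respect to the image of a generating set of the \emph{whole} lattice $\SL_n(\ZZ)$, whereas $\{\tilde A^{\pm1},\tilde B^{\pm1}\}$ generates a thin (free, infinite-index) subgroup; since expansion of a family of Cayley graphs is sensitive to the choice of generating sets --- one can only transfer it when the new generators are uniformly bounded-length words in the old ones, which fails here precisely because the diameter of $G_{n,p}$ grows with $p$ --- the property-(T) argument does not apply as stated to these generators. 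The Bourgain--Varj\'u (or Salehi Golsefidy--Varj\'u) expansion theorem for Zariski-dense subgroups is exactly the tool needed, and it is also what underlies the Arzhantseva--Biswas results quoted in Theorem~\ref{theorem: ArzBis}. Your Zariski-density verification is sound (a proper closed subgroup of the connected group $\SL_n$ has dimension at most $n^2-2$, so its $\FF_p$-points are too few to generate $\SL_n(\FF_p)$ for large $p$, modulo the finitely many primes of bad reduction which you correctly set aside); alternatively one could extract density directly from the strong approximation statement implicit in item~\ref{item: Ap and Bp generate}, or cite it from \cite{Arzhantseva2018Large}. What your approach buys is a complete and correctly targeted argument at the cost of invoking heavier machinery explicitly; what the paper's sketch buys is brevity, at the cost of citing a spectral gap mechanism that does not, on its own, cover generating sets coming from thin subgroups.
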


\begin{proof}[Sketch of proof]

By item \ref{item: Ap and Bp generate} of Theorem\ref{theorem: ArzBis},
there exists a prime $p_0$ such that
	for all $p \ge p_0$, the matrices $\tilde{A}_p \vcentcolon= \pi_p(\tilde{A})$ and
	$\tilde{B}_p \vcentcolon= \pi_p(\tilde{B})$ generate $\SL_n(\FF_p)$.
	Now, since $\SL_{n}(\ZZ)$ has property (T) for $n\ge3$ \cite{bekka2008kazhdan},
	and $\SL_2(\ZZ)$ has property $(\tau)$ with respect to the family of congruence subgroups\cite{Lubotzky2005What}, the proposition follows.
\end{proof}

An immediate consequence of this proposition is that the random walk
approximates the uniform distribution after $O(\log p)$ steps in the
corresponding graph $G_{n,p}$, as we will elaborate in
Section \ref{subsection: mixing and indistinguishability}.
	We note that in \cite{Bromberg2017Navigating}, random walks are conducted on
	Cayley graphs with respect to non-symmetric generating sets, and thus their
	asymptotic
	behavior is less clear.
	Similar issues arise in \cite{Tomkins2020New}, since then hash values could be restricted to a proper subgroup.
	As stated in \cite{Arzhantseva2018Large}, we note that one can effectively compute the lower bound $p_0$.
	No explicit bound on $p_0$ has been given,
	though by combining existing results
	one can probably prove that $p$ need not be very large,
	likely on the order of magnitude of $n$;
	see for instance \cite[Appendix]{Golsefidy2012Expansion} and
	\cite[Theorem D]{Guralnick1999Small}. Note that the larger the value of the prime $p$,
	the more secure the hash function.

\subsection{The general construction}
We now use matrices given in \cite{Arzhantseva2018Large} to define an explicit
family of hash functions.

\begin{definition}[Special linear group based hash functions]\label{definition: general hash function}
Let $n \ge 3$ and let $p$ be a prime number.
Let $a,b,\ell \ge 2$ that satisfy:

\begin{itemize}
	\item If $n=3$, $a \equiv 1 (\mod 3)$, $b \equiv -1 (\mod 3)$ and $\ell=4^k$ for some positive integer $k$.
	\item If $n\geq 4$, there exists a prime $q$ such that $n \equiv a \equiv b \equiv 1 (\mod q)$ and $\ell$ is at least $3(n-1)$ and is of the form $q^{k+1}+1$ for some integer $k$.
\end{itemize}

Consider the matrices $\tilde{A}$ and $\tilde{B}$ from the previous section. In the following we will denote $A\equiv \tilde{A}^{\ell}$ and $B\equiv \tilde{B}^{\ell}$.

We use the notation $[k]$ to denote the set of integers from 1 to $k$, and $[k]^*$ to denote the set of finite strings of integers in $[k]$.
We now define the hash function $\varphi:[3]^* \to \SL_n(\FF_p)$.
We start by choosing bijections \[s: [4] \to \{A^{\pm1},B^{\pm1}\},\quad s_{\lambda}: [3] \to \{A^{\pm1},B^{\pm1}\}\setminus s(\lambda)\] for each $\lambda \in [4]$.
	
Then, given \[x=(x_i)_{1\leq i\leq k} \in [3]^{k},\] we have the following inductive
definition:
	
\begin{itemize}
	\item $B_1 = s_1(x_1)$,
	\item $B_i = s_{\lambda}(x_i)$ with $\lambda=s\ii(B_{i-1}\ii)$, for $2\leq i \leq k$.
\end{itemize}
		
Finally, we set $\varphi(x) \vcentcolon = B_1 \cdots B_k\in \SL_n(\FF_p)$.
\end{definition}

\begin{remark}
Note that $G_{n,p}$ is $4$--regular, so that after the first digit $x_1$ of the input $x$,
there are exactly three non-backtracking edges in the graph by which to proceed.
The input $x$ can thus be viewed as encoding a reduced word in the free group $F_2$.
The lack of backtracking in the resulting walk on $G_{n,p}$ is crucial for the avoidance of
collisions, as well as for the reduction of mixing time.
\end{remark}

As stated in \cite{Arzhantseva2018Large}, the elements
$\{A,B\}$ generate a free subgroup of $\SL_n(\ZZ)$ and generate
$\SL_n(\FF_p)$ for all but finitely many values of $p$, and these facts give rise
to strong preimage and collision resistance of the resulting hash functions.

\subsection{A concrete example}\label{subsec: concrete example}

We finish this section by describing a family of
concrete examples of hash functions, which are constructed for the specific values
$a=4$, $b=2$ and $\ell=4$.
We do not know what minimal value of $n$ would ensure security.

\begin{definition}\label{definition: concrete example}

	Let $p$ be a prime, and let
	\[
A = \begin{pmatrix}
	1 & 4 & 0 & 0& \ldots & 0\\
	0 & 1 & 4 & 0& \ldots & 0\\
	0 & 0 & 1 & 4 & \ldots &0\\
	\vdots & & & & & \vdots\\
	&&&&\ldots&4\\
	0 & 0 & 0& 0&\ldots & 1\\
	\end{pmatrix}^4,\,
B = \begin{pmatrix}
	1 & 0 & 0 & \ldots & &0\\
	2 & 1 & 0 & \ldots & &0\\
	0 & 2 & 1 &  \ldots& &0\\
	0 & 0 & 2 & \ldots & &0\\
	\vdots &&&&\vdots&\vdots\\
		0& 0 & 0 & \ldots & {\strut \strut 2} & 1\\
	\end{pmatrix}^4\in \SL_{n}(\FF_p),
\]

	Let $s(1)=A$, $s(2)=B$, $s(3)=A^{-1}$, $s(4)=B^{-1}$.
	We define the functions $\{s_{\lambda}\}_{\lambda\in [4]}$ as follows:
	\begin{itemize}
		\item $s_1(1)=B$, $s_1(2)=A\ii$, $s_1(3)=B\ii$,
		\item $s_2(1)=A$, $s_2(2)=A\ii$, $s_2(3)=B\ii$,
		\item $s_{3}(1)=A$, $s_{3}(2)=B\ii$, $s_{3}(3)=B$,
		\item $s_{4}(1)=A$, $s_{4}(2)=A\ii$, $s_{4}(3)=B$,
	\end{itemize}

Given an input sequence $x = \{x_i\}_{i\in[1,k]} \in [3]^k$, we inductively define:

\begin{itemize}
	\item $B_1 = s_1(x_1)$
	\item $B_{i} = s_{\lambda}(x_{i})$, with $\lambda = s\ii(B_{i-1}\ii)$,
	for each $k \in [2,k]$.
\end{itemize}

Then, the sequence $x$ is hashed to the matrix:

	\[\varphi(x) = B_1 \cdots B_k .\]

	\end{definition}

Thus, we obtain a hash function for every $n\geq 3$.

\begin{example}\label{example: an instance of the hash function}
With $n=3$ we have:
	\[
A = \begin{pmatrix}
	1 & 16 & 96 \\
	0 & 1 & 16 \\
	0 & 0 & 1 \\
	\end{pmatrix},\,
B = \begin{pmatrix}
	1 & 0 & 0 \\
	8 & 1 & 0 \\
	24 & 8 & 1 \\
	\end{pmatrix}\in \SL_{3}(\FF_p),
\]

For example, if we consider the sequence $x=2232221$, following the procedure above we obtain the sequence:
\begin{itemize}
    \item $B_1=s_1(2)=A\ii$
    \item $B_2=s_1(2)=A\ii$, where we use the map $s_1$ because $B_1\ii=s(1)$,
    \item $B_3=s_1(3)=B\ii$, where we use the map $s_1$ because $B_2\ii=s(1)$,
    \item $B_4=s_2(2)=A\ii$, where we use the map $s_2$ because $B_3\ii=s(2)$,
    \item $B_5=s_1(2)=A\ii$, where we use the map $s_1$ because $B_4\ii=s(1)$,
    \item $B_6=s_1(2)=A\ii$, where we use the map $s_1$ because $B_5\ii=s(1)$,
    \item $B_7=s_1(1)=B$, where we use the map $s_1$ because $B_6\ii=s(1)$,
\end{itemize}

Finally, $x$ is mapped to $B_1B_2\ldots B_7$:
\[
\varphi(x)=A^{-2}B\ii A^{-3}B =
\begin{pmatrix}
694190977 & 233260720 & 29297952\\
-38379648 & -12896255 & -1619792\\
1191936 & 400512 & 50305\\
\end{pmatrix}
\in \SL_3(\FF_p).\]

We refer to Appendix \ref{section: implementation} for a Python/Sage implementation of this example.
\end{example}

\section{Properties of the constructed hash functions}\label{section:
properties hash function}

In this section we use graph and group-theoretic machinery to describe the security
of the hash functions defined above. We center our analysis on resistance to preimage
and collision breaking. The exposition is divided into five parts: first, we
establish a lower bound in the girth of the Cayley graphs of the group
$\SL_n(\FF_p)$ with respect to the generating system $\{A_p^{\pm 1},B_p^{\pm 1}\}$;
second, we describe the consequences of girth bounds for collision resistance;
third, we investigate mixing properties of the suggested platform.
Two last subsections are devoted to multiplicativity properties of the hash function, and showing that the so-called palindromic attack from \cite{grassl2011cryptanalysis} is inapplicable.

\subsection{Free groups and girth}\label{subsec: free girth}

The following proposition is in the spirit of \cite{Bromberg2017Navigating}.

\begin{proposition}\label{proposition: freeness implies big girth}
Let $A,B \in \SL_n(\ZZ)$ such that the entries of $A^{\pm 1}$ and $B^{\pm 1}$ are bounded in absolute value by a positive constant $c$.
If $A$ and $B$ generate a free subgroup of $\SL_n(\ZZ)$, then the girth of the Cayley graph of $\langle A_p,B_p \rangle \leq \SL_n(\FF_p)$, with respect to $\{A_p^{\pm 1},B_p^{\pm 1}\}$ is at least \[\left\lfloor \frac{\log (p-1)}{\log nc} \right\rfloor.\]
\end{proposition}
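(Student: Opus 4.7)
The plan is to translate a short cycle in the Cayley graph of $\langle A_p, B_p \rangle$ into a non-trivial matrix identity modulo $p$, and then contradict it using simple growth bounds on the entries of matrix products. Concretely, a cycle of length $k$ through the identity corresponds to a non-trivial reduced word $w$ of length $k$ in the symbols $\{A^{\pm1}, B^{\pm1}\}$ such that $w(A_p, B_p) = I$ in $\SL_n(\FF_p)$; equivalently, $w(A, B) \equiv I \pmod{p}$ as integer matrices.

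First I would invoke the freeness hypothesis: since $A$ and $B$ freely generate their subgroup of $\SL_n(\ZZ)$, the non-trivial reduced word $w$ satisfies $w(A, B) \neq I$ in $\SL_n(\ZZ)$. Hence the integer matrix $M \vcentcolon= w(A, B) - I$ is non-zero, and every one of its entries is divisible by $p$, so some entry $M_{ij}$ has $|M_{ij}| \geq p$.

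Second, I would bound the entries of $w(A, B)$ from above. Each factor $A^{\pm1}$ or $B^{\pm1}$ has entries of absolute value at most $c$, and in particular each has $\ell^\infty$ operator norm (maximum absolute row sum) at most $nc$. By submultiplicativity, $\|w(A, B)\|_\infty \leq (nc)^k$, so every entry of $w(A, B)$ has absolute value at most $(nc)^k$. Combining the two estimates, and noting that in the worst case the non-zero entry of $M$ sits on the diagonal, where $|M_{ii}| \geq p$ only forces $|w_{ii}| \geq p - 1$ (the off-diagonal case yields the slightly stronger bound $(nc)^k \geq p$), we obtain $(nc)^k \geq p - 1$. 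Taking logarithms and using that $k$ is an integer yields $k \geq \lfloor \log(p-1)/\log(nc) \rfloor$, which is the desired lower bound on the girth.

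There is no deep obstacle; the argument is the standard Margulis-style girth estimate. The only point requiring mild care is the identification of a minimum-length cycle in the Cayley graph with a non-trivial reduced word in the free group on $\{A, B\}$. For a symmetric generating set this is automatic, since a minimal cycle is non-backtracking, and freeness of $\langle A, B \rangle$ then ensures that such a word cannot evaluate to the identity in $\SL_n(\ZZ)$. Once this correspondence is in place, the entire proof reduces to the submultiplicativity estimate above.
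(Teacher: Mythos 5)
Your proposal is correct and follows essentially the same route as the paper's proof: both identify a girth-length cycle with a non-trivial reduced word $w$ killed by reduction mod $p$, use freeness to conclude $w(A,B)-I$ is a non-zero integer matrix with all entries divisible by $p$, and compare the resulting lower bound of $p-1$ on some entry with the upper bound $(nc)^k$ obtained from submultiplicativity. Your packaging of the entry bound via the $\ell^\infty$ operator norm is just a restatement of the paper's induction, and your remark on the diagonal versus off-diagonal case makes explicit the reason the paper states $p-1$ rather than $p$.
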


\begin{proof}
For any reduced word $w$ in $A^{\pm 1}$ and $B^{\pm 1}$, we write $w_{\ZZ}$ (resp. $w_{\FF_p}$) for the projection of $w$ to $\SL_n(\ZZ)$ (resp. $\SL_n(\FF_p)$).
It follows by a straightforward induction on $k$ that, if $w$ has length $k$, then the entries of $w_{\ZZ}$ cannot exceed $(nc)^k$ in absolute value.
Now, let $\ell$ be the girth of the corresponding Cayley graph.
Then, there exists a nontrivial reduced word $w$ of length
$\ell$ such that $w_{\FF_p} = 1$.
It follows that $w_{\ZZ}$ is of the form $1+pM$, where $M$ is an integer matrix.
Since $w$ is nontrivial and since $\{A,B\}$ generate a rank two free subgroup of
$\SL_n(\ZZ)$, the matrix $M$ is nonzero.
We conclude that $w_{\ZZ}$ has an entry of absolute value at least $p-1$.
Since the entries of $w_{\ZZ}$ cannot exceed $(nc)^k$ in absolute value,
we have that the length $\ell$ of $w$ is bounded below by
$\lfloor \frac{\log (p-1)}{\log nc} \rfloor$, and the desired conclusion holds.
\end{proof}

\subsection{Preimage and collision resistance, and
post-quantum heuristics}\label{subsection: resistance via polynomial equations}

    We now analyze the resistance of our model to finding preimages and to collisions.
    Observe that finding a preimage of a particular hash value (resp.~finding a
    collision of hash values) is equivalent
    to finding a factorization of a given group element
    (resp.~of the identity) in $\SL_n(\FF_p)$ with respect to the generating set. We note that in general, Tillich--Z\'emor hash
    functions seem to have robust collision resistance;
    see~\cite{Tinani23}.

    The matrices $A_p,B_p$ involved have order $p$, so a factorization can be seen as a family of equations $\{\eqref{equation: Em}\}_{m\ge0}$ with variables \[k_1,\ldots,k_m, \ell_1,\ldots,\ell_m \in\FF_p\] satisfying:
    \[\tag{$E_m$}\label{equation: Em} A^{k_1}B^{\ell_1}\ldots A^{k_m}B^{\ell_m} = M,\]
    for a given challenge $M \in \SL_n(\FF_p)$.
    This problem is equivalent to attacking the preimage resistance of the hash function.
    In the case where $M$ is the identity matrix, this is equivalent to attacking the collision resistance.
    Note that there are trivial solutions
    to preimage and collision breaking of the hash function, given that $A_p^p$ is the identity.
    Since the girth of
    $G_{n,p}$ is $O(\log p)$, we consider nontrivial solutions to
    preimage or collision breaking to be ones where
    \[C_1\log p\leq \sum_{i=1}^m (k_i+\ell_i)\leq C_2\log p,\] where $C_1$ and $C_2$ are
    positive constants depending on $n$ but not on $p$. Note that estimates for
    $C_1$ and $C_2$ can be produced, and that
    Proposition~\ref{proposition: freeness implies big girth} furnishes an estimate
    for $C_1$, for instance. Sharp values for $C_1$ and $C_2$ are of relatively
    minor consequence for us.

    Each entry of the left-hand side matrix in equation $\eqref{equation: Em}$
    is polynomial in \[k_1,\ldots,k_m,
    \ell_1,\ldots,\ell_m.\]
    
	Thus, the equation $\eqref{equation: Em}$, equivalent to attacking preimage of the hash function, corresponds
	to a system of $n^2$ multivariate polynomial equations over $\FF_p$.

    Solving multivariate polynomial equations over a finite field is known to be NP-hard \cite{GareyJ1979Computers}, which suggests a good level of security.
    Moreover, the reduction to solving multivariate polynomials, a class of hardness problems considered for standardization by the NIST, provides a certain degree of confidence that the hash function is post-quantum.
    We contrast this approach with schemes based on isogeny graphs,
    which reduce to a more well-defined problem, albeit one not known to be NP-hard.

    NP-hardness of a class of problems is a worst case complexity property, and
    for certain NP-hard classes of problems, relatively simple and efficient algorithms
    can find solutions in the vast majority of cases. Thus, NP-hardness of
    the underlying problem is not a guarantee of post-quantum behavior
    of the hash function.

    A more compelling case for the hash function to be post-quantum arises from
    empirical difficulty of factoring in special linear groups over finite fields.
    For instance, in~\cite{FaugerePPR11}, subexponential factorization algorithms
    were found for $\SL_2(\FF_{2^k})$, and these were only found in 2011.
    These algorithms rely essentially on the fact that the matrices are $2\times 2$,
    and on the fact that the underlying field has characteristic two. Thus, the methods
    do not generalize in any straightforward way to larger dimensional special linear
    groups nor to fields with odd characteristic. In practice, factoring matrices over
    finite fields is quite difficult, and implemented algorithms are inefficient.
    Hardness appears to be optimized when the system of equations resulting from
    $\eqref{equation: Em}$ is neither underdetermined nor overdetermined, i.e.~when
    the number of equations and variables is comparable. Thus, the larger
    the value of $p$ the more secure the hash function, at the expense of computational
    time and space, and the balance of degrees of freedom and constraints occurs when
    $n^2\sim\log p$, or in other words when $n$ is approximately the square root
    of the logarithm of $p$. This is precisely the balance to
    required so that the number of equations and number of
    variables are comparable, as per the foregoing discussion.
    We may then expect the factorization problem to take
    exponential time in the number of variables in this case.

\subsection{The mixing property}\label{subsection: mixing and indistinguishability}

By the mixing property, we mean that the output vertex of a random input --- in our case a random walk --- approaches the uniform distribution on the hash space.
When the random walk approaches the uniform distribution quickly,
mixing is observed even when the input messages have relatively small length, say polynomial in $\log p$.
More precisely, we have the following corollary of result of
Alon-Benjamini-Lubetzky-Sodin \cite{Alon2007NonBacktracking}, which characterizes
the rate at which a random walk on a graph converges to the uniform distribution
in terms of the spectral properties of its adjacency matrix:
\begin{theorem}
\cite[Theorem 1.1, cf.~proof of Theorem 1.3]{Alon2007NonBacktracking}
\label{theorem: mixing non backtracking}
	Suppose $d>2$.
	Let $X_0,X_1,\ldots,X_{\ell}$ be a non backtracking random walk on a
	$d$--regular connected graph $G$ with $N$ vertices. There is a constant $C>0$
	such that
	whenever $\ell \ge C\cdot \log N$ we have
	\[\left|\PP(X_{\ell}=v)-1/N\right| \le 1/N^2,\]
	for every vertex $v$ of $G$.
\end{theorem}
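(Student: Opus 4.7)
The plan is to lift the non-backtracking walk to a Markov chain on the set $\vec E$ of oriented edges of $G$, and then to analyze it spectrally. The transition operator $P$ on $\bbR^{\vec E}$ sends a directed edge $(u,v)$ to the uniform distribution on the $d-1$ directed edges $(v,w)$ with $w \ne u$. The law of $X_\ell$ is then recovered from $P^{\ell-1}$ applied to the initial distribution supported on the out-edges of $X_0$, summing the resulting probabilities over the in-edges of the target vertex $v$.

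The central step is to control the spectrum of the unnormalized Hashimoto operator $B = (d-1)P$. For $d$-regular graphs, the Ihara--Bass identity factors the characteristic polynomial of $B$ through that of the adjacency matrix $A$: each eigenvalue $\lambda$ of $A$ contributes a pair of eigenvalues $\mu_{\pm}$ of $B$ satisfying
\[
\mu^2 - \lambda\mu + (d-1) = 0,
\]
together with additional $\pm 1$ eigenvalues whose multiplicities are dictated by the first Betti number of $G$. The Perron value $\lambda = d$ produces $\mu = d-1$ (the stationary direction on $\vec E$) and $\mu = 1$, whereas every other eigenvalue of $A$ yields $|\mu| < d-1$, with the gap quantitative in the spectral gap of $A$.

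With a spectral gap for $B$ in hand, the plan is to deduce pointwise mixing of the vertex walk through a standard $\ell^2$-to-$\ell^\infty$ argument. Writing $\pi$ for the uniform distribution on $\vec E$ and $\nu_0$ for the starting distribution on out-edges of $X_0$, one obtains an estimate of the form
\[
\|P^\ell \nu_0 - \pi\|_\infty \le c\,\rho^\ell,
\]
for some $\rho < 1$ depending on the spectral gap. Summing over the $d$ oriented edges terminating at a fixed vertex $v$ translates this into the announced bound on $|\PP(X_\ell = v) - 1/N|$; choosing $C$ so that $\rho^{C\log N} \le 1/N^2$ closes the argument.

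The main obstacle is that $P$ is not self-adjoint, so the spectral theorem does not apply directly and one cannot simply bound $\|P^\ell\|$ by a power of a spectral radius. The standard workaround is either to pass to the symmetrization $P^*P$ or to diagonalize via the Ihara--Bass block decomposition, paying attention to the possibly non-trivial Jordan blocks at $\mu = \pm 1$ arising from the kernel of the incidence map. A secondary, minor difficulty is translating an $\ell^2$ estimate on oriented edges into the pointwise vertex estimate, which relies crucially on the $d$-regularity of $G$ via $|\vec E| = dN$.
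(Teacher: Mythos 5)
First, a point of reference: the paper does not prove Theorem \ref{theorem: mixing non backtracking} at all --- it is quoted verbatim from Alon--Benjamini--Lubetzky--Sodin, and the only ``proof'' in the paper is the citation. Measured against the argument in that reference, your plan is a genuinely different route. Alon et al.\ work entirely on the vertex space: the number of non-backtracking walks of length $k$ from $u$ to $v$ is $q_k(A)_{uv}$, where the polynomials $q_k$ satisfy the three-term recurrence $q_{k+1}(x)=x\,q_k(x)-(d-1)\,q_{k-1}(x)$; since $A$ is symmetric they diagonalize it directly and estimate $q_k(\lambda)/\bigl(d(d-1)^{k-1}\bigr)$ via the substitution $\lambda=2\sqrt{d-1}\cos\theta$, i.e.\ Chebyshev-type asymptotics. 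The quadratic $\mu^2-\lambda\mu+(d-1)=0$ appears there too, but as the characteristic equation of a scalar recurrence, not as the spectrum of a non-self-adjoint operator --- which is precisely how they sidestep the obstacle you flag. Your Hashimoto-operator route is also standard and can be pushed through (decompose $B$ into $2\times2$ blocks over each eigenspace of $A$, accept a polynomial-in-$\ell$ loss from the Jordan blocks at $\lambda=\pm2\sqrt{d-1}$, and handle the $\pm1$ eigenvalues separately), but as written you have named the central difficulty rather than resolved it; in particular ``pass to $P^*P$'' alone does not suffice, since $\|P\|_{\mathrm{op}}\le 1$ gives contractivity but no decay, and the singular values of $P^\ell$ are not powers of those of $P$.

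Two further caveats. Your claim that every non-Perron eigenvalue of $A$ yields $|\mu|<d-1$ fails for $\lambda=-d$, i.e.\ for bipartite $G$, where the non-backtracking walk is periodic and never mixes; and even in the non-bipartite case the rate $\rho$, hence the constant $C$, necessarily depends on the spectral gap of $A$, so no absolute $C$ can work for the class of all connected $d$-regular graphs. This is really a defect of the theorem as stated here --- the paper's own remark immediately after the statement concedes that $C$ depends on the eigenvalues of $A$ and is uniform only across an expander family --- but a self-contained proof must make the hypothesis $\max\{|\lambda_2|,|\lambda_N|\}\le(1-\delta)d$ explicit and take $C=C(d,\delta)$.
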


Examining the proof given in \cite{Alon2007NonBacktracking}, one finds that
the rate of mixing depends not so much on the graph $G$, but rather on the eigenvalues
of the adjacency matrix of $G$. Thus, if $G$ is a member of a
sequence $\{G_i\}_{i\in\mathbb N}$ of expander graphs, we may take the constant
$C$ in Theorem~\ref{theorem: mixing non backtracking} to depend only on the expansion
constant of the family.

It is well-known that mixing properties are desirable in Tillich-Z\'emor hashing schemes;
see \cite{charles2008cryptographic,TillichZemor08Collision}.
As explained in \cite{TillichZemor08Collision}, mixing is particularly
relevant when the hash functions are used in protocols whose security relies on the random oracle model; see \cite{BellareR1993RandomOracles} for example of such protocols.

The probability that an attacker finds a collision is at least the probability given by the birthday paradox, taking samples at random, which is minimized with the uniform probability \cite[Exercise 13.7]{Steele_2004}.

The relevance of this approach depends on the distribution of possible messages and in particular on how they are encoded, a question we do not address in the present paper.

Surprisingly few mathematical statements addressing the relationship between
mixing and attacks are present in the literature; an example can be found in
\cite[Theorem 3]{Sterner2021Commitment}, in the context of commitment schemes.
The following proposition is an immediate consequence of the previous theorem and the discussion above:

\begin{proposition}\label{prop: indistinguishability statement}
	Let $\varphi:[3]^k \to \SL_n(\FF_p)$ be the hash function of Definition \ref{definition: general hash function}, and let
	$N=|\SL_n(\FF_p)|$.
	Then, there is a positive constant $C$ such that, if $k \ge C\cdot \log N$, and $m$ is taken uniformly at random in $[3]^k$, then we have
	\[\left|\PP(\varphi(m)=M)-1/N\right| \le 1/N^2,\]
	for every $M \in \SL_n(\FF_p)$.
\end{proposition}

\subsection{Multiplicativity and parallel computing}\label{subsection: homomorphism parallel computing}

The hash functions considered in this article take as input a string of
integers in $[3]$, convert each integer into a matrix of the form
$\{ A^{\pm1}, B^{\pm1} \}$, and finally output the product of these matrices.

The fact that we require the underlying walk to be non-backtracking
implies that this mapping is not locally determined: a given digit in the
string is mapped to a matrix that depends on the previous digits.
This dependency can be dramatic:
for example, according to Definition \ref{definition: concrete example} a sequence of the form $133\cdots3$ will be mapped to the product $B\cdot B\cdot B \cdots B$,
while a sequence of the form $333\ldots3$ will be mapped to the product
$B\ii\cdot B\ii\cdot B\ii \cdots B\ii$.
In particular, the last digit $3$ of these two strings can be mapped to
different matrices, depending on the first digit in the string. The endpoints of the
corresponding walk in the Cayley graph may be far away from each other.

As a consequence, the function $\varphi$ need not be multiplicative under
concatenation of strings, which is generally a desirable feature for hash function.
This lack of multiplicativity makes it difficult to perform parallel computations with
the given hash functions, as we now investigate in more detail.

\subsubsection{Good and bad tails}

Recall that for a finite set $X$, the notation $X^*$ is used for
the set of finite length strings of elements of $X$.
As before, the notation $[3]$ denotes the set $\{1,2,3\}$.

\begin{definition}
	Let $G$ be a finite group, generated by two elements $A$ and $B$.
	Let $\tilde\varphi: [3]^* \to \{A^{\pm 1}, B^{\pm 1}\}^*$.

	A string $s \in [3]^*$ is called a \emph{good tail} with respect to
	$\tilde\varphi$ if there exists $S \in \{A^{\pm 1}, B^{\pm 1}\}$ such that for every $s' \in [3]^*$, the last letter of $\tilde\varphi(s's)$ is $S$,
	where here $s's$ is the string obtained from the concatenation of $s'$ and $s$.
	A string which is not a good tail is called a \emph{bad tail}.	
\end{definition}

Local constraints in Definition \ref{definition: concrete example}, can be obtained by the following fact:

\begin{fact}
The function \[\tilde\varphi: \{x_i\} \in [3]^* \mapsto \{B_i\} \in \{A^{\pm 1}, B^{\pm 1}\}^* \] constructed in Definition \ref{definition: concrete example} has the following good tails: 11, 31, 22, 32, 13 and 23.
\end{fact}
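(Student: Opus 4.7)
The plan is a direct finite case analysis using the fact that the assignment $x_i \mapsto B_i$ is Markovian in the previous output matrix: by construction $B_i = s_\lambda(x_i)$ where $\lambda = s^{-1}(B_{i-1}^{-1})$, so $B_i$ is a deterministic function of the pair $(B_{i-1}, x_i)$ alone. Hence whether a string $t = t_1 t_2 \cdots t_k \in [3]^\ast$ is a good tail with respect to $\tilde\varphi$ depends only on the possible values of the matrix produced just before $t$ begins, which a priori can be any element of $\{A^{\pm 1}, B^{\pm 1}\}$.

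The first step is to write down the transition table $T(M, x) \vcentcolon= s_{s^{-1}(M^{-1})}(x)$ for the four choices of $M \in \{A, B, A^{-1}, B^{-1}\}$ and the three choices of $x \in [3]$. For instance, if $M = A$ then $\lambda = s^{-1}(A^{-1}) = 3$ and one reads off $s_3$ from Definition \ref{definition: concrete example}: $T(A, 1) = A$, $T(A, 2) = B^{-1}$, $T(A, 3) = B$. Repeating this for $M = B, A^{-1}, B^{-1}$ (using $s_4, s_1, s_2$ respectively) produces a $4 \times 3$ table of $12$ entries.

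The second step is, for each claimed good tail $t = t_1 \cdots t_k$, to iteratively compute the set $\Sigma_j \subseteq \{A^{\pm 1}, B^{\pm 1}\}$ of possible matrices after reading the prefix $t_1\cdots t_j$, starting from $\Sigma_0 = \{A^{\pm 1}, B^{\pm 1}\}$ and applying $\Sigma_j = \{T(M, t_j) : M \in \Sigma_{j-1}\}$. Then $t$ is a good tail, with associated final letter $S$, if and only if $\Sigma_k = \{S\}$. For the six tails in question this collapse happens after exactly two steps; for example, for $t = 11$ one gets $\Sigma_1 = \{A, B\}$ (three of the four previous states give $A$ and the state $A^{-1}$ gives $B$), and then both $T(A,1) = A$ and $T(B,1) = A$, so $\Sigma_2 = \{A\}$. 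The analogous computations give $S = A$ for $31$, $S = A^{-1}$ for $22$ and $32$, $S = B$ for $13$, and $S = B^{-1}$ for $23$.

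There is no real obstacle beyond bookkeeping: the verification is a finite check of $6 \times 2 = 12$ table look-ups built from the $12$-entry transition table. The only thing to be careful about is that $\Sigma_0$ must be taken to be all of $\{A^{\pm 1}, B^{\pm 1}\}$ and not a proper subset, since by definition a good tail must give the same last letter regardless of what precedes it; but this is exactly what the computation above achieves.
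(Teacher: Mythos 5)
Your proposal is correct and takes essentially the same approach as the paper: the paper's proof simply asserts that the check is straightforward and lists the resulting final letters, while you carry out that same direct verification explicitly via the transition table $T(M,x)$ and the reachable-set iteration $\Sigma_j$. Your computations agree with the maps $s_\lambda$ of Definition \ref{definition: concrete example} and with the final letters $A$, $A^{-1}$, $B$, $B^{-1}$ claimed in the paper, and your choice of $\Sigma_0=\{A^{\pm1},B^{\pm1}\}$ correctly handles both the empty prefix (since $B_1=s_1(x_1)$ corresponds to a previous state of $A^{-1}$) and arbitrary nonempty prefixes.
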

\begin{proof}
It is straightforward to check that:

\begin{itemize}
	\item any string ending in 11 or 31 outputs a string ending in $A$;
	\item any string ending in 22 or 32 outputs a string ending in $A\ii$;
	\item any string ending in 13 outputs a string ending in $B$;
	\item any string ending in 23 outputs a string ending in $B\ii$.
\qedhere
\end{itemize}

\end{proof}


The following proposition shows that the mapping above is optimal.

\begin{proposition}\label{proposition: max six good tails}
Special linear group based hash functions (Definition \ref{definition: general hash function}) admit at most six good tails of length two.
\end{proposition}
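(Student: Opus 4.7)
The plan is to recast the iteration defining $\tilde\varphi$ as a walk in a deterministic automaton with state set $\{A^{\pm 1}, B^{\pm 1}\}$ and alphabet $[3]$. After identifying the state set with $[4]$ via the bijection $s$, each $s_\lambda$ becomes a bijection $[3]\to [4]\setminus\{\lambda\}$; I will set $\phi_a(\lambda):=s_\lambda(a)$ and let $\operatorname{inv}:[4]\to [4]$ be the fixed-point-free involution $\mu\mapsto s^{-1}(s(\mu)^{-1})$ that swaps $A\leftrightarrow A^{-1}$ and $B\leftrightarrow B^{-1}$. The transition from state $\mu$ on input $a\in [3]$ then reads $f_a(\mu)=\phi_a(\operatorname{inv}(\mu))$, and a string $(a,b)\in [3]^2$ is a good tail precisely when $f_b\circ f_a$ is constant on $[4]$. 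Since the nine length-two strings partition into three triples by their first letter, it suffices to prove that for every $a\in [3]$ at most two of the three choices of $b$ produce a good tail.

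The first step is to observe that $I_a:=\phi_a([4])$ has size at least $2$: because $\phi_a(\lambda)\ne \lambda$ for every $\lambda$, $\phi_a$ has no fixed points, and a constant value $c$ would force $\phi_a(c)=c$. Unfolding the composition, $f_b\circ f_a$ is constant if and only if $\phi_b$ is constant on the set $T:=\operatorname{inv}(I_a)$, and $|T|=|I_a|\ge 2$ because $\operatorname{inv}$ is a bijection. The main step is then to rule out the case in which all three $b$'s produce good tails with first letter $a$: assuming this, for each $b\in [3]$ there exists $c_b\in [4]$ with $\phi_b|_T\equiv c_b$, so for every $\lambda\in T$ and every $b\in [3]$ one has $s_\lambda(b)=\phi_b(\lambda)=c_b$.

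The contradiction will then follow by noting that the fixed triple $(c_1,c_2,c_3)$ would have to equal the image of $s_\lambda$ for every $\lambda\in T$; but this image is exactly $[4]\setminus\{\lambda\}$, and these are pairwise distinct subsets of $[4]$. Hence $|T|\le 1$, contradicting $|T|\ge 2$, so at most two $b$'s make $(a,b)$ a good tail with first letter $a$. Summing over $a\in [3]$ yields the bound of six. I expect the only subtle point to be the translation from the good-tail condition to constancy of $f_b\circ f_a$ (which becomes constancy of $\phi_b$ on $T$); once this framing is in place, the column-wise argument --- viewing the $s_\lambda$ as columns of a $3\times 4$ array whose ranges are forced to differ by the value of $\lambda$ --- delivers the result immediately.
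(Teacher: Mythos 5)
Your proof is correct, and it reaches the bound of six by a route that is organized differently from the paper's. The paper proves the dual decomposition: for each choice of the \emph{last} letter $b\in[3]$ it exhibits at least one first letter $b'$ making $b'b$ a bad tail, arguing on the transition table that if all three $b'b$ were good then some step matrix would have to occupy three cells of the column labelled by $b$, and then extracting a contradiction from the structure of the remaining row; that counting step is stated rather tersely there and actually needs the row-bijectivity to be justified. You instead fix the \emph{first} letter $a$ and bound the number of admissible second letters, and your final contradiction is cleaner: if all three $(a,b)$ were good tails, every state $\lambda$ in the (at least two-element) set $T=\operatorname{inv}(\phi_a([4]))$ would have the same transition row $(s_\lambda(1),s_\lambda(2),s_\lambda(3))=(c_1,c_2,c_3)$, hence the same image $[4]\setminus\{\lambda\}$, forcing all elements of $T$ to coincide. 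This uses only the two structural facts that each $\phi_a$ is fixed-point-free (so $|I_a|\ge 2$) and that each $s_\lambda$ is a bijection onto the complement of $\{\lambda\}$, which makes the argument shorter and easier to verify than the paper's column-counting. One cosmetic caveat: your ``precisely when'' in the reduction of the good-tail condition to constancy of $f_b\circ f_a$ on $[4]$ is only needed, and only clearly valid, in the forward direction --- a good tail forces $f_b\circ f_a$ to be constant because every element of $[4]$ is achievable as the step matrix preceding the tail; the converse could in principle acquire an extra constraint from very short prefixes, but this does not affect the upper bound you prove.
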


	The bound in Proposition~\ref{proposition: max six good tails} is
	sharp, as shown via
	the mappings from Definition \ref{definition: concrete example}.
	The proof of
	Proposition \ref{proposition: max six good tails} will follow from the
	following lemma:

	\begin{lemma}\label{lemma: always at least one bad tail per digit}
	Let \[\tilde\varphi: \{x_i\} \in [3]^* \mapsto \{B_i\} \in \{A^{\pm 1}, B^{\pm 1}\}^*
	\] be a special linear group based hash function (Definition \ref{definition: general hash function}), and	
	let $b \in [3]$. Then, there exists $b' \in [3]$ such that $b'b$ is a bad tail with respect to $\tilde\varphi$.
	\end{lemma}
	
	\begin{proof}
	The only freedom that we have in the construction of Definition \ref{definition: general hash function} is how we define the
maps $s_i$.
	We call elements of $\{A^{\pm1},B^{\pm1}\}$ \emph{step matrices}.
	Using $\tilde\varphi$, we say that the elements
	of $[3]$ are \emph{encoded} by step matrices.
	We summarize the definitions of the maps $s_i$ in a table,
	with one row for each step matrix, and one column for each element of $[3]$.
	Each cell from this tabular contains a step matrix.
	
	To use this table,
	start from a string $\{x_i\} \in [3]^*$.
	Say that for some $i>1$, we want to find the step matrix associated with $x_i$.
	Let $S$ be the step matrix encoding $x_{i-1}$.
	Then, the step matrix encoding $x_i$ is the step matrix in the cell located in the row labelled by $S$ and in the column labelled by $x_i$.
	
	It follows from the definition of the maps $s_i$ that for each step matrix $S$, the row labelled by $S$ contains exactly the three matrices in the
	set $\{A^{\pm1},B^{\pm1}\} \setminus S^{-1}$.
	The mapping from Definition \ref{definition: concrete example} can be described as in Figure~\ref{figure: mappings of the si's}.
	
	\begin{figure}\caption{Description of the maps $s_i$
	in Definition \ref{definition: concrete example}}\label{figure: mappings of the si's}
	\begin{center}
		\begin{tabular}{ r l | lll }
     \multicolumn{2}{l|}{last step matrix} & 1 & 2 & 3 \\ \hline
     & $A\ii$ & $B$ & $A\ii$ & $B\ii$ \\
     & $B\ii$ & $A$ & $A\ii$ & $B\ii$ \\
     & $A$ & $A$ & $B\ii$ & $B$ \\
     & $B$ & $A$ & $A\ii$ & $B$
		\end{tabular}
	\end{center}
	\end{figure}

	Moreover, since each matrix can actually be the last step matrix used, every cell of the table can potentially be used.
	Fix an element $b \in [3]$, and assume for a
    contradiction that every integer $b' \in [3]$ has the property
    that $b'b$ is a good tail.
	The column corresponding to each $b'$ has to contain at least two different step matrices.
	This implies that, in the row labelled by $b$, at least two cells contain the same step matrix.
	Since this is true for each $b' \in [3]$, this implies in particular that there is a step matrix $S$ that is contained three times in the column labelled by $b$.
	The fourth cell of this column cannot be part of the row labelled by $S$ since this would give rise to another $S$ in a different column.
	This implies that the label $S'$ of this row appears in a cell of another column.
	Additionally, this column contains a cell with step matrix not equal to $S'$.
	Then, the label $b' \in [3]$ of this column gives us an integer
	having the property that inputs ending by $b'b$ can have either $S$ or another matrix as a final matrix.
	This is a contradiction and concludes the proof of the lemma.	
	\end{proof}
	
	\begin{proof}[Proof of Proposition \ref{proposition: max six good tails}]
	From Lemma \ref{lemma: always at least one bad tail per digit}, to each integer of $[3]$ corresponds at least one bad tail, giving three different bad tails.	
\end{proof}

As remarked previously, Definition \ref{definition: concrete example} shows that the estimate in Proposition~\ref{proposition: max six good tails} is sharp,
	and so that in some sense, we have found an optimal way of defining the maps $s_i$.

%
%

\subsubsection{Multiplicativity}
It follows from the discussion
of good and bad tails above that multiplicativity of the hash function
can be obtained by restricting to sequences whose product ends with the matrix $s(1)\ii$.

\begin{fact}
	In Definition \ref{definition: concrete example}, we have $\varphi(s_1 * s_2) = \varphi(s_1) \cdot \varphi(s_2)$, provided $s_1$ ends with $22$ or $32$.
\end{fact}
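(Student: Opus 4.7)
The plan is to exploit the Markovian structure of the hash construction: the matrix $B_i$ assigned to the $i$-th input bit depends only on the previous matrix $B_{i-1}$ (together with the bit value). Consequently, if the ``state''---i.e., the last matrix produced---at the end of computing $\varphi(s_1)$ coincides with the implicit starting state used when computing $\varphi(s_2)$ in isolation, then the $s_2$-computation can be spliced onto the $s_1$-computation without change, and the product of all step matrices equals $\varphi(s_1)\cdot\varphi(s_2)$ by associativity of matrix multiplication.

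To make this precise, write $s_1 = (x_1,\ldots,x_m)$ and $s_2 = (y_1,\ldots,y_n)$, and denote by $(B_i)_{i\le m}$, $(B_j')_{j\le n}$, and $(C_k)_{k\le m+n}$ the respective sequences of step matrices produced by $\varphi(s_1)$, $\varphi(s_2)$, and $\varphi(s_1 * s_2)$. First, the identical initialization and identical recurrence in the first $m$ steps of the $s_1 * s_2$-computation force $C_k = B_k$ for $1 \le k \le m$, so $\prod_{k=1}^{m} C_k = \varphi(s_1)$. Second, by the preceding Fact on good tails, the hypothesis that $s_1$ ends in $11$ or $31$ yields $B_m = A = s(1)$. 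Third, the $(m+1)$-st step gives $C_{m+1} = s_\lambda(y_1)$ with $\lambda = s\ii(B_m\ii) = s\ii(A\ii) = 3$, so $C_{m+1} = s_3(y_1)$; and the initialization convention in Definition \ref{definition: concrete example} is designed so that the ``virtual previous matrix'' for the first bit of any input to $\varphi$ is $A = s(1)$, whence $B_1' = s_3(y_1)$ as well, and therefore $C_{m+1} = B_1'$.

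Once $C_{m+1} = B_1'$, a routine induction on $j$ gives $C_{m+j} = B_j'$ for every $1 \le j \le n$: both $(B_j')_j$ and $(C_{m+j})_j$ obey the identical recurrence $X_j = s_{s\ii(X_{j-1}\ii)}(y_j)$, and they agree at $j=1$. Multiplying all factors then yields
\[\varphi(s_1 * s_2) = \prod_{k=1}^{m+n} C_k = \Bigl(\prod_{k=1}^{m} B_k\Bigr)\Bigl(\prod_{j=1}^{n} B_j'\Bigr) = \varphi(s_1)\cdot\varphi(s_2).\]

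The only substantive step is the state-matching in the middle paragraph, which amounts to the observation that the tails $11$ and $31$ are precisely those for which the walk terminates in state $A$---the state that the initialization of $\varphi$ was engineered to emulate. Everything else is bookkeeping, and the induction in the final paragraph is mechanical once the base case is in place.
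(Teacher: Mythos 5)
Your splicing strategy is the right one---indeed the only sensible one, and the paper offers no proof of this Fact beyond the preceding discussion of good tails---but the one step you yourself flag as substantive, the state-matching $C_{m+1}=B_1'$, fails under the definitions as written. Definition \ref{definition: concrete example} initializes with $B_1 = s_1(x_1)$, and $s_1$ is the map whose codomain excludes $s(1)=A$; since the recurrence always excludes $B_{i-1}^{-1}$, the ``virtual previous matrix'' emulated by an $s_1$-initialization is $A^{-1}$, not $A$. Hence $B_1' = s_1(y_1)$, whereas (as you correctly compute) $C_{m+1}=s_3(y_1)$ when $B_m=A$; these disagree for every value of $y_1$ (e.g.\ $s_1(1)=B$ while $s_3(1)=A$), so the two products genuinely differ. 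Your assertion that the initialization ``is designed so that the virtual previous matrix is $A=s(1)$'' is precisely the point that needed checking, and it is false for the definition as stated.

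The statement and the definition are in fact mutually inconsistent, and a correct write-up must repair one of them: either the hypothesis should be that $s_1$ ends with $22$ or $32$ (the good tails forcing $B_m=A^{-1}$, whence $\lambda=s^{-1}(A)=1$ and $C_{m+1}=s_1(y_1)=B_1'$, after which your induction goes through verbatim), or the initialization must be taken to be $B_1=s_3(x_1)$, in which case the tails $11$ and $31$ are the right ones. Either repair salvages your argument wholesale; but as written, your proof asserts rather than verifies the one identity on which everything hinges, and that identity is false.
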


\subsubsection{Parallel computing}
Multiplicativity of the hash function under suitable conditions can be leveraged to
compute its values by parallel computation.
First, look for good tail substrings, namely: 11, 31, 22, 32, 13 or 23.
For generic messages, one would expect such substrings to be quite common.
Next, split the input immediately following one of these strings,
and apply a slightly modified hash function (i.e.~using the relevant $s_i$ instead of
$s_1$ in the first matrix mapping).
Finally, compute the product of the hash outputs.

\begin{example}
Say we want to hash the string $1321321323$. Observe that: \break$132\mathbf{13}21323$.
We thus compute: $M_1 = \varphi(13213)$ and $M_2 = \varphi'(21323)$, where
$\varphi'$ is defined analogously to $\varphi$ in Definition
\ref{definition: concrete example}, apart from the fact that $B_1$ is set to be
$s_4(x_1)$ instead of $s_1(x_1)$, since $\varphi(13213)$ is a product ending by $B$.
Finally,
$\varphi(1321221323) = M_1 \cdot M_2$.
\end{example}

\subsection{Palindromic attacks}
One of several proposals of hashing by walks on Cayley graphs can be found in
\cite{tillich1994hashing}, wherein the Cayley graph is that of $SL_2(\FF_{2^n})$.
A method for finding collisions for this hash function is presented in
\cite{grassl2011cryptanalysis} (cf.~\cite{FaugerePPR11});
we argue that the attack does not apply
in our case, though our evidence for this claim is primarily empirical.

The idea of \cite{grassl2011cryptanalysis} is to find collisions on
\textit{palindromes}; that is, digit-string entries that are invariant under
reversing the order.
To begin, one conjugates generators of $SL_2(\FF_{2^n})$ to obtain new generators
which give rise to an isomorphic graph, but which are symmetric matrices.
That is, if the original generators are $\{A^{\pm1}, B^{\pm1}\}$,
one finds a matrix $C$ such that $\hat{A}=CAC^{-1}$ and $\hat{B}=CBC^{-1}$ are both symmetric matrices.

We first note that in our case, finding $C$ is not easy; for $SL_3(5)$ and $SL_3(7)$, about $0.02\%$ of the elements satisfy this criterion.
Moreover, there is no obvious way to compute $C$; attempts to calculate the entries
of such a matrix directly have proved resistant to equation solving methods in
standard computer algebra systems - indeed, this approach is actually less
efficient than just checking all possible matrices. Therefore,
we do not have much data for larger primes, since the naive method used to find
a suitable matrix $C$ quickly becomes computationally infeasible.

Provided one can find a matrix $C$, it follows that collisions in the hash
function with respect to $\hat{A},\hat{B}$ as generators are exactly the collisions with $A, B$ as generators; one can therefore
rename the matrices $\hat{A}, \hat{B}$ as $A, B$. One then proceeds according
\cite[Lemma 1]{grassl2011cryptanalysis}: upon input of a palindromic string $v$, the output of the product of conjugated generators in $SL_2(\FF_{2^n})$ will always be a symmetric matrix.

Since our hash function requires avoidance of backtracking in the walk,
we are not guaranteed a palindromic matrix product from a palindromic input string; however, since one could reverse-engineer the necessary input to obtain a palindromic matrix product, we proceed to discuss palindromic matrix products without reference to
hash function.

It turns out, as one may check easily by induction, that a palindromic product in
symmetric generators will itself be symmetric. The ultimate goal of
\cite{grassl2011cryptanalysis} is to use this fact to demonstrate that
the function \[\rho:M\mapsto AMA+BMB,\] where $M$ is a palindromic product,
outputs a matrix populated with either zeroes or the square of a field element
appearing as an entry in $M$.
One then employs number theoretic tricks to force the nonzero elements to $0$ in
$M$ and thus to obtain $\rho(M)=0$.
One thus builds distinct palindromic decompositions of the same matrix.

Consider the generators from Definition~\ref{definition: concrete example} over $\SL_3(\FF_{11})$. Transforming these generators with respect to the matrix
\[C =
\begin{pmatrix}
2 & 6 & 10 \\
5 & 3 & 10 \\
2 & 3 & 3
\end{pmatrix},
\]
one checks that the palindrome $M=ABABA$ is such that
\[M=
\begin{pmatrix}
7 & 4 & 2 \\
4 & 0 & 6 \\
2 & 6 & 6
\end{pmatrix},
\quad
\rho(M)=
\begin{pmatrix}
2 & 1 & 5 \\
1 & 1 & 7 \\
5 & 7 & 7
\end{pmatrix}.
\]

In particular, for each $i\in[10]$, the matrix $\rho(M)$ contains an entry
that is not the $i^{th}$ power of any entry of $M$. This furnishes evidence that
for $p=11$, there is little hope of extending \cite[Corollary 1]{grassl2011cryptanalysis} to our context; we argue that the lack of closed form of transformed generators in general,
the difficulty of finding them for larger parameters, and this example with a small
value of $p$, conspire to provide strong evidence that the approach will fail in general.

\section{Conclusions}

\label{section: conclusions}
We have presented new Tillich-Z\'emor hash functions, with platforms Cayley graphs of $\SL_n(\FF_p)$ for $n \ge 3$.
We show that choosing appropriate generating matrices produces graphs without small cycles, and having a quick mixing property, both of which are highly desirable for preimage and collision resistance. Moreover, flexibility of choice of generating matrices and of the dimension $n$ gives the option of increasing the complexity of attacks.
Future work includes the exact computation of the spectral gap and the prime $p_0$ (cf.~item (i) of Theorem \ref{theorem: ArzBis}). Moreover, simulations
should be carried out in order to compare with other existing schemes
and determine the optimal values of $p$ and $n$ to be taken in implementations.

\section*{Acknowledgments}
The authors would like to thank Ludovic Perret, Hadi \break Salmasian, Vladimir Shpilrain, and Bianca Sosnovski for helpful comments and discussions.
The authors thank ITS for hosting CL, CB, and RF while this project has been done partially.
DK thanks Institut des Hautes \'Etudes Scientifiques - IHES for providing stimulating environment while this project was partially done.
She has conducted this work partially with the support of ONR Grant 62909-24-1-2002.

The authors are also grateful to two anonymous referees who gave interesting suggestions on how to improve the paper.

\newpage
\appendix
\section{Implementation of the hash function}\label{section: implementation}
In this appendix, we provide an implementation of the hash function given in Example \ref{example: an instance of the hash function}.

\begin{center}
\begin{minipage}{0.98\textwidth}
\begin{python}
from sage.all import *
# a 1024 bits random prime p
nbits = 1024
p = random_prime(2**nbits, lbound = 2**(nbits-1))
assert is_prime(p)

# definition of the matrices A, B and their inverses
A = matrix(GF(p),[[1, 16, 96], [0, 1, 16], [0, 0, 1]])
Ainv = A**(-1)
B = matrix(GF(p),[[1, 0, 0], [8, 1, 0], [24, 8, 1]])
Binv = B**(-1)

# identification between digits (in base 4) and matrices
s = [A, B, Ainv, Binv]
sigma = [[B, Ainv, Binv], [A, Ainv, Binv], [A, Binv, B], [A, Ainv, B]]

# definition of the hash function
def hash(string):
    # input must be a string
    # the only characters allowed are '1', '2' and '3'
    out = identity_matrix(GF(p), 3)
    inv_prec = A
    for k in range(len(string)):
        i = s.index(inv_prec)
        step_matrix = sigma[i][int(string[k])-1]
        inv_prec = step_matrix**(-1)
        out = out * step_matrix
    return out

# test of the example given in the paper
string = '2232221'
assert hash(string) == matrix(GF(p),
[[694190977, 233260720, 29297952],
[-38379648, -12896255, -1619792],
[1191936, 400512, 50305]])
\end{python}
\end{minipage}
\end{center}








\newcommand{\etalchar}[1]{$^{#1}$}


\begin{thebibliography}{BdlHV08}

\bibitem[AB22]{Arzhantseva2018Large}
Goulnara Arzhantseva and Arindam Biswas.
\newblock Logarithmic girth expander graphs of
  {$\mathrm{SL}_n({\mathbb{F}}_p)$}.
\newblock {\em Journal of Algebraic Combinatorics}, 2022.

\bibitem[ABLS07]{Alon2007NonBacktracking}
Noga Alon, Itai Benjamini, Eyal Lubetzky, and Sasha Sodin.
\newblock Non-backtracking random walks mix faster.
\newblock {\em Commun. Contemp. Math.}, 9(4):585--603, 2007.

\bibitem[BdlHV08]{bekka2008kazhdan}
Bachir Bekka, Pierre de~la Harpe, and Alain Valette.
\newblock {\em Kazhdan's Property (T)}.
\newblock New Mathematical Monographs. Cambridge University Press, 2008.

\bibitem[Bor20]{Borthwick2020}
Tom Borthwick.
\newblock Applications of homomorphic cryptographic primitives in blockchain
  and the internet of things, Undergraduate Thesis, University of York, 2020.

\bibitem[BR93]{BellareR1993RandomOracles}
Mihir Bellare and Phillip Rogaway.
\newblock Random oracles are practical: A paradigm for designing efficient
  protocols.
\newblock In {\em Proceedings of the 1st ACM Conference on Computer and
  Communications Security}, CCS '93, page 62–73, New York, NY, USA, 1993.
  Association for Computing Machinery.

\bibitem[BSV17]{Bromberg2017Navigating}
Lisa Bromberg, Vladimir Shpilrain, and Alina Vdovina.
\newblock Navigating in the {C}ayley graph of {$SL_2(\Bbb{F}_p)$} and
  applications to hashing.
\newblock {\em Semigroup Forum}, 94(2):314--324, 2017.

\bibitem[CLG08]{charles2008cryptographic}
Denis Charles, Kristin Lauter, and Eyal Goren.
\newblock Cryptographic hash functions from expander graphs.
\newblock {\em Journal of Cryptology}, 22:93--113, 12 2008.

\bibitem[FPPR11]{FaugerePPR11}
Jean{-}Charles Faug{\`{e}}re, Ludovic Perret, Christophe Petit, and
  Gu{\'{e}}na{\"{e}}l Renault.
\newblock New subexponential algorithms for factoring in
  $\mathrm{SL}_2({{\mathbb{F}}_{2^n}})$.
\newblock {\em {IACR} Cryptol. ePrint Arch.}, page 598, 2011.

\bibitem[GIMS11]{grassl2011cryptanalysis}
Markus Grassl, Ivana Ili{\'c}, Spyros Magliveras, and Rainer Steinwandt.
\newblock Cryptanalysis of the {T}illich--{Z}{\'e}mor hash function.
\newblock {\em Journal of cryptology}, 24(1):148--156, 2011.

\bibitem[GJ79]{GareyJ1979Computers}
Michael~R. Garey and David~S. Johnson.
\newblock {\em Computers and intractability}.
\newblock A Series of Books in the Mathematical Sciences. W. H. Freeman and
  Co., San Francisco, Calif., 1979.
\newblock A guide to the theory of NP-completeness.

\bibitem[GM18]{Ghaffari2018More}
Mohammad~Hossein Ghaffari and Zohreh Mostaghim.
\newblock More secure version of a {C}ayley hash function.
\newblock {\em Groups Complex. Cryptol.}, 10(1):29--32, 2018.

\bibitem[Gur99]{Guralnick1999Small}
Robert~M. Guralnick.
\newblock Small representations are completely reducible.
\newblock {\em J. Algebra}, 220(2):531--541, 1999.

\bibitem[GV12]{Golsefidy2012Expansion}
Alireza~Salehi Golsefidy and P\'{e}ter~P. Varj\'{u}.
\newblock Expansion in perfect groups.
\newblock {\em Geom. Funct. Anal.}, 22(6):1832--1891, 2012.

\bibitem[Hal15]{Hall2015}
Brian Hall.
\newblock {\em Lie groups, {L}ie algebras, and representations}, volume 222 of
  {\em Graduate Texts in Mathematics}.
\newblock Springer, Cham, second edition, 2015.
\newblock An elementary introduction.

\bibitem[LPS88]{Lubotzky1988Ramanujan}
Alex Lubotzky, Ralph~S. Phillips, and Peter~C. Sarnak.
\newblock Ramanujan graphs.
\newblock {\em Combinatorica}, 8(3):261--277, 1988.

\bibitem[Lub05]{Lubotzky2005What}
Alex Lubotzky.
\newblock What is{$\dots$}property {$(\tau)$}?
\newblock {\em Notices Amer. Math. Soc.}, 52(6):626--627, 2005.

\bibitem[Piz90]{Pizer1990Ramanujan}
Arnold~K. Pizer.
\newblock Ramanujan graphs and {H}ecke operators.
\newblock {\em Bull. Amer. Math. Soc. (N.S.)}, 23(1):127--137, 1990.

\bibitem[Sar21]{Sardari}
Naser~T. Sardari.
\newblock Complexity of strong approximation on the sphere.
\newblock {\em Int. Math. Res. Not. IMRN}, (18):13839--13866, 2021.

\bibitem[SBK{\etalchar{+}}17]{Stevens2017First}
Marc Stevens, Elie Bursztein, Pierre Karpman, Ange Albertini, and Yarik Markov.
\newblock The first collision for full {SHA}-1.
\newblock In {\em Advances in cryptology---{CRYPTO} 2017. {P}art {I}}, volume
  10401 of {\em Lecture Notes in Comput. Sci.}, pages 570--596. Springer, Cham,
  2017.

\bibitem[Sos18]{Sosnovski2018}
Bianca Sosnovski.
\newblock Recent developments in {C}ayley hash functions.
\newblock In {\em Mathematical software---{ICMS} 2018}, volume 10931 of {\em
  Lecture Notes in Comput. Sci.}, pages 438--447. Springer, Cham, 2018.

\bibitem[SS16]{Shpilrain2016Compositions}
Vladimir Shpilrain and Bianca Sosnovski.
\newblock Compositions of linear functions and applications to hashing.
\newblock {\em Groups Complex. Cryptol.}, 8(2):155--161, 2016.

\bibitem[Ste04]{Steele_2004}
J.~Michael Steele.
\newblock {\em The Cauchy-Schwarz Master Class: An Introduction to the Art of
  Mathematical Inequalities}.
\newblock Cambridge University Press, 2004.

\bibitem[Ste21]{Sterner2021Commitment}
Bruno Sterner.
\newblock Commitment schemes from supersingular elliptic curve isogeny graphs.
\newblock Cryptology ePrint Archive, Report 2021/1031, 2021.
\newblock \url{https://ia.cr/2021/1031}.

\bibitem[Tin23]{Tinani23}
Simran Tinani.
\newblock Methods for collisions in some algebraic hash functions.
\newblock 2023.

\bibitem[TNS20]{Tomkins2020New}
Hayley Tomkins, Monica Nevins, and Hadi Salmasian.
\newblock New {Z}\'{e}mor-{T}illich type hash functions over {${\rm GL}_2(\Bbb
  F_{p^n})$}.
\newblock {\em J. Math. Cryptol.}, 14(1):236--253, 2020.

\bibitem[TZ94a]{Tillich1994Algebraic}
Jean-Pierre Tillich and Gilles Z\'{e}mor.
\newblock Group-theoretic hash functions.
\newblock In {\em Algebraic coding ({P}aris, 1993)}, volume 781 of {\em Lecture
  Notes in Comput. Sci.}, pages 90--110. Springer, Berlin, 1994.

\bibitem[TZ94b]{tillich1994hashing}
Jean-Pierre Tillich and Gilles Z{\'e}mor.
\newblock Hashing with {$\mathrm{SL}_2$}.
\newblock In {\em Annual International Cryptology Conference}, pages 40--49.
  Springer, 1994.

\bibitem[TZ08]{TillichZemor08Collision}
Jean-Pierre Tillich and Gilles Z\'{e}mor.
\newblock Collisions for the {LPS} expander graph hash function.
\newblock In {\em Proceedings of the Theory and Applications of Cryptographic
  Techniques 27th Annual International Conference on Advances in Cryptology},
  EUROCRYPT'08, page 254–269, Berlin, Heidelberg, 2008. Springer-Verlag.

\end{thebibliography}
\end{document}